\newcommand{\es}{\varnothing}
\title{{\sc On Independence Domination}}
\author{
 Wing-Kai~Hon\inst{1}
\and
 Ton~Kloks\inst{1}
\and 
 Hsiang~Hsuan~Liu\inst{1}
\and 
 Sheung-Hung~Poon\inst{1} 
\and
 Yue-Li~Wang\inst{2}} 
\institute{
 Department of Computer Science\\
 National Tsing Hua University, Taiwan\\
 {\tt \{wkhon,hhliu,kloks,spoon\}@cs.nthu.edu.tw} 
\and 
 Department of Information Management\\ 
 National Taiwan University of Science and Technology\\ 
 {\tt ylwang@cs.ntust.edu.tw}
}
\begin{document}

\maketitle

\begin{abstract}
Let $G$ be a graph. 
The independence-domination number $\gamma^i(G)$ is the 
maximum over all independent sets $I$ in $G$ of the 
minimal number of vertices needed to dominate $I$. 
In this paper we investigate the computational complexity 
of $\gamma^i(G)$ for graphs in several graph classes related 
to cographs. We present an exact exponential 
algorithm. We also present a PTAS for planar graphs. 
\end{abstract}

\section{Introduction}
\label{section intro}

Let $G=(V,E)$ be a graph. A set $A$ of vertices dominates a 
set $B$ if 
\[B \subseteq \bigcup_{x \in A}\; N[x].\] 
The minimal cardinality of a set of vertices needed 
to dominate a set $B$ is denoted by $\gamma_G(B)$. 
The domination number $\gamma(G)$ of the graph $G$ 
is thus defined as 
$\gamma_G(V)$, where $V$ is the set of vertices of $G$.  
When the graph $G$ is clear from the context we omit the 
subscript $G$. 

\bigskip 

\begin{definition}
The independence-domination number $\gamma^i(G)$ is 
\[\gamma^i(G)= \max \; \{\;\gamma(A)\;|\; \text{$A$ is an 
independent set in $G$}\;\}.\] 
\end{definition}

\bigskip 

Obviously, $\gamma(G) \geq \gamma^i(G)$. In~\cite{kn:aharoni} 
it was shown that $\gamma(G)=\gamma^i(G)$ for chordal graphs. 
Using this result Aharoni and Szab\'o showed that Vizing's 
conjecture on the domination number of the Cartesian product 
of graphs is true for chordal graphs, ie, 
\[\gamma(G \Box H) \geq \gamma(G) \cdot \gamma(H) 
\quad \text{when $G$ and $H$ are chordal~\cite{kn:aharoni2}.}\]  

\bigskip 

The Cartesian product $G \Box H$ is the graph which has pairs 
$(g,h)$, $g \in V(G)$ and $h \in V(H)$ as its vertices. Two pairs 
$(g_1,h_1)$ and $(g_2,h_2)$ are adjacent in $G \Box H$ if 
either $g_1=g_2$ and $\{h_1,h_2\} \in E(H)$ or 
$\{g_1,g_2\} \in E(G)$ and $h_1=h_2$~\cite{kn:imrich}. 
Vizing conjectured in 1968~\cite{kn:vizing} that 
\[\text{for all graphs $G$ and $H$}\quad
\gamma(G \Box H) \geq \gamma(G) \cdot \gamma(H).\] 
In 1994 Fisher proved that 
\begin{equation}
\label{eqn0}
\text{for all connected graphs $G$ and $H$}\quad 
\gamma(G \Box H) \geq \gamma_f(G) \cdot \gamma(H),
\end{equation}
where $\gamma_f(G)$ is the fractional domination 
number~\cite{kn:fisher} (see also~\cite{kn:bresar3}).  
The fractional domination number is, by linear programming duality 
equal to the fractional 2-packing number 
(see, eg,~\cite{kn:rubalcaba}). For strongly 
chordal graphs $\gamma_f(G)=\gamma(G)$~\cite{kn:scheinerman} 
and, therefore, Vizing's conjecture is true for strongly chordal 
graphs.  
Recently, more progress was made by Suen and Tarr~\cite{kn:suen}. 
They proved that 
\[\text{for all graphs $G$ and $H$} \quad 
\gamma(G \Box H) \geq \frac{1}{2}\cdot \gamma(G) \cdot \gamma(H)+ 
\frac{1}{2} \cdot \min \;\{\;\gamma(G),\;\gamma(H)\;\}.\] 

\bigskip 

Actually, in~\cite{kn:aharoni2} the authors show 
that 
for all graphs $G$ and $H$ 
\[\gamma(G \Box H) \geq \gamma^i(G) \cdot \gamma(H)
\quad\text{and}\quad \gamma^i(G \Box H) \geq \gamma^i(G) \cdot 
\gamma^i(H).\] 

\bigskip 

These result prompted us to investigate the computational 
complexity of $\gamma^i(G)$ for some classes of graphs.
We find that especially cographs, and related classes of graphs, 
deserve interest since they are completely decomposable 
by joins and unions and they are therefore susceptible 
to proofs by induction. 
As far as we know, 
the computational complexity 
of $\gamma(G \Box H)$ is still open for cographs.    

\bigskip 
 
Computing the domination number is NP-complete 
for chordal graphs~\cite{kn:bertossi,kn:booth}, 
and this implies the NP-completeness 
for the independence domination.  
A similar proof as in~\cite{kn:bertossi} shows that 
independence domination is NP-complete for bipartite graphs.    
It is NP-complete to decide whether $\gamma^i(G) \geq 2$ 
for weakly chordal graphs%
~\cite{kn:milanic}. 
The problem is polynomial for strongly chordal 
graphs~\cite{kn:farber2}.  

\section{Cographs}
\label{section cographs}

In this section we present our results for the class of cographs. 

\begin{definition}
A cograph is a graph without induced $P_4$. 
\end{definition}

\bigskip 

Cographs are the graphs $G$ that either have only one 
vertex, or for which either $G$ or $\Bar{G}$ is 
disconnected~\cite{kn:corneil}. Obviously, the class of 
graphs is hereditary in the induced subgraph order.  
It follows that a graph is a cograph if it is completely 
decomposable by joins and unions. We write $G=G_1 \oplus G_2$ 
when $G$ is the union of two smaller cographs $G_1$ and $G_2$ and 
we write $G= G_1 \otimes G_2$ when $G$ is the join of two 
smaller cographs $G_1$ and $G_2$.   

\begin{theorem}
\label{thm gamma cograph}
When $G$ is a cograph with at least two vertices then 
\[\gamma(G) = \begin{cases}
\min\;\{\;\gamma(G_1),\;\gamma(G_2),\;2\} & \text{if $G=G_1 \otimes G_2$,}\\
\gamma(G_1)+\gamma(G_2) & \text{if $G=G_1 \oplus G_2$.}
\end{cases}\]
\end{theorem}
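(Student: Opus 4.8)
The plan is to handle the two cases of the decomposition separately, in each case proving a matching upper and lower bound, and in both cases the whole argument rests on a single structural observation: in a join $G = G_1 \otimes G_2$ every vertex of one side is adjacent to every vertex of the other, whereas in a union $G = G_1 \oplus G_2$ there are no edges across, and in either case $G_1, G_2$ are induced subgraphs with $V(G_i) \neq \varnothing$.

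\textbf{The union case.} For $G = G_1 \oplus G_2$: since $N_G[x] \subseteq V(G_i)$ whenever $x \in V(G_i)$, any dominating set $D$ of $G$ splits as $D = (D \cap V(G_1)) \cup (D \cap V(G_2))$ with $D \cap V(G_i)$ dominating $G_i$, so $\gamma(G) \geq \gamma(G_1) + \gamma(G_2)$; conversely the union of a minimum dominating set of $G_1$ and one of $G_2$ dominates $G$, giving the reverse inequality. Hence $\gamma(G) = \gamma(G_1) + \gamma(G_2)$.

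\textbf{The join case.} For $G = G_1 \otimes G_2$, first the upper bound: choosing any $v_1 \in V(G_1)$ and $v_2 \in V(G_2)$ (both parts are nonempty), we have $V(G_2) \subseteq N_G[v_1]$ and $V(G_1) \subseteq N_G[v_2]$, so $\{v_1, v_2\}$ dominates $G$ and $\gamma(G) \leq 2$; moreover a minimum dominating set $D_1$ of $G_1$ is nonempty, hence also dominates all of $V(G_2)$, so $D_1$ dominates $G$ and $\gamma(G) \leq \gamma(G_1)$, and symmetrically $\gamma(G) \leq \gamma(G_2)$. Thus $\gamma(G) \leq \min\{\gamma(G_1), \gamma(G_2), 2\}$. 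For the lower bound I would split on the value of $\gamma(G)$: if $\gamma(G) \geq 2$ then trivially $\gamma(G) \geq 2 \geq \min\{\gamma(G_1), \gamma(G_2), 2\}$; if $\gamma(G) = 1$, a single vertex $v$ dominates $G$, and assuming $v \in V(G_1)$ we get $N_{G_1}[v] = N_G[v] \cap V(G_1) = V(G_1)$, so $\gamma(G_1) = 1$ and hence $\min\{\gamma(G_1), \gamma(G_2), 2\} \leq 1 = \gamma(G)$ (the case $\gamma(G) = 0$ cannot occur since $|V(G)| \geq 2$). Combining the two bounds gives $\gamma(G) = \min\{\gamma(G_1), \gamma(G_2), 2\}$.

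\textbf{Main obstacle.} There is no deep difficulty here; the only points needing care are (i) using that both parts of the decomposition of a cograph are nonempty, which is exactly what lets a dominating set of $G_1$ automatically dominate $V(G_2)$ in the join, and (ii) organising the lower bound in the join case through the dichotomy $\gamma(G) = 1$ versus $\gamma(G) \geq 2$ so that the $\min$ on the right-hand side is matched uniformly.
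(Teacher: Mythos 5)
Your proof is correct and follows essentially the same route as the paper: splitting the dominating set across components for the union, and for the join using a cross pair to get the bound of $2$ together with the observation that $\gamma(G)=1$ exactly when one constituent has a universal (dominating) vertex. You spell out the upper/lower bound bookkeeping more explicitly than the paper does, but the underlying argument is the same.
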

\begin{proof}
When $G$ is the union of two graphs $G_1$ and $G_2$ then 
$\gamma(G)=\gamma(G_1)+\gamma(G_2)$, since no vertex of 
$G_1$ dominates a vertex of $G_2$ and vice versa. 

\medskip 

\noindent
Assume that $G=G_1 \otimes G_2$. Any pair of vertex $x \in V(G_1)$ and 
$y \in V(G_2)$ is a dominating set. When one of $G_1$ or $G_2$ has a 
universal vertex then that is a universal vertex for $G$. 
This proves the formula for the join. 
\qed\end{proof}

\bigskip 

\begin{remark} 
In~\cite{kn:barcalkin} a graph $G$ is called decomposable 
if its clique cover number is 
$\gamma(G)$, that is, if 
\[\chi(\Bar{G})=\gamma(G).\] 
The ``A-class'' is the collection 
of graphs that can be made decomposable by adding edges 
to it without changing the domination number. 
It is shown that graphs with domination number two, 
such as complete multi-partite graphs,  
belong to the A-class~\cite{kn:barcalkin}. 
According to~\cite{kn:barcalkin} 
Vizing's conjecture holds true for graphs in A-class 
(see also~\cite[Theorem~2]{kn:bresar2}). 
In~\cite{kn:aharoni2} 
the authors raise the interesting 
question whether chordal graphs are A-class 
graphs.  
\end{remark}

\bigskip 

\begin{theorem}
Let $G$ be a cograph. Then $\gamma^i(G)$ is the 
number of components of $G$. 
\end{theorem}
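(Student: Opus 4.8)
The plan is to reduce the statement to a single claim about connected cographs, namely that $\gamma^i(C)=1$ whenever $C$ is connected, and then to deal with disconnected cographs by a straightforward additive decomposition.

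First I would record the union rule $\gamma^i(G_1 \oplus G_2) = \gamma^i(G_1) + \gamma^i(G_2)$. This follows exactly as in the union case of Theorem~\ref{thm gamma cograph}: an independent set $A$ of $G_1 \oplus G_2$ decomposes as $A = A_1 \cup A_2$ with $A_j$ independent in $G_j$, and since no vertex of $G_1$ dominates a vertex of $G_2$ (and vice versa) we get $\gamma(A) = \gamma_{G_1}(A_1) + \gamma_{G_2}(A_2)$; maximizing over $A_1$ and $A_2$ independently gives the identity. Iterating this, and using that the class of cographs is hereditary so that every connected component of $G$ is again a cograph, we obtain $\gamma^i(G) = \sum_{i=1}^{k} \gamma^i(C_i)$, where $C_1,\dots,C_k$ are the connected components of $G$.

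Next I would prove that $\gamma^i(C) = 1$ for every connected cograph $C$. The base case $C=K_1$ is immediate, since $\gamma(\{v\})=1$. If $C$ has at least two vertices, then, being a connected cograph, its complement $\Bar C$ is disconnected, so $C = C_1 \otimes C_2$ is the join of two nonempty cographs. Any independent set $A$ of $C$ cannot meet both $V(C_1)$ and $V(C_2)$, as such a pair of vertices would be adjacent in the join; hence $A \subseteq V(C_1)$, say. Then every vertex $w \in V(C_2)$ satisfies $A \subseteq N[w]$, so $\gamma_C(A) \le 1$, with equality as soon as $A \ne \es$. Therefore $\gamma^i(C) = 1$.

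Combining the two parts yields $\gamma^i(G) = \sum_{i=1}^{k} \gamma^i(C_i) = k$, the number of components. I do not anticipate a genuine obstacle: the only points that need a word of care are the base case $K_1$ and the empty independent set, which contributes $\gamma(\es)=0$ and hence never affects the maximum (a single vertex already gives value $1$). The real content is the join step, where the absence of an induced $P_4$ — equivalently, the join decomposition of a connected cograph — forces every independent set to lie inside one side and therefore to be dominated by a single vertex of the other side.
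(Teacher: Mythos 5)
Your proof is correct and follows essentially the same route as the paper: induction on the cograph decomposition, using that an independent set in a join lies entirely on one side (hence is dominated by a single vertex of the other side) and that $\gamma^i$ is additive over disjoint unions. The only difference is presentational --- you first reduce to connected components and then handle the connected case, whereas the paper runs the induction directly on the join/union structure --- but the substance is identical.
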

\begin{proof}
When $G$ has only one vertex then $\gamma^i(G)=1$.

\medskip 

\noindent 
Assume that $G=G_1 \otimes G_2$. 
Any maximal independent set is contained in $G_1$ or in 
$G_2$. To dominate it, one needs only one vertex, from 
the other constituent. 

\medskip 

\noindent
Assume that $G=G_1 \oplus G_2$. Then 
any maximal independent set is the union of a 
maximal independent set in $G_1$ and $G_2$. 
For the independence domination we have 
\[\gamma^i(G)=\gamma^i(G_1)+\gamma^i(G_2).\] 
By induction, $\gamma^i(G_j)$ is the number 
of components in $G_j$ for $j \in \{1,2\}$. 
\qed\end{proof}

\section{Distance-hereditary graphs}
\label{section DH}

Distance-hereditary graphs were introduced by Howorka as those 
graphs in which for every pair of nonadjacent vertices 
all the chordless paths that connect them have the same length%
~\cite{kn:howorka}. This class of graphs properly contains the 
class of cographs.  

\bigskip 

Distance-hereditary graphs $G$ have a decomposition tree $(T,f)$ 
which is described as follows 
(see~\cite{kn:oum} or, eg,~\cite{kn:kloks2}).  
Here, $T$ is a rooted binary tree and $f$ is a 
bijection from the vertices of $G$ to the leaves of $T$. 
Let $e$ be an edge of $T$ and let $W_e$ be the set of vertices 
that are mapped to the leaves in the subtree rooted at $e$. 
The ``twinset'' $Q_e \subseteq W_e$ is the set of vertices 
that have neighbors in $V \setminus W_e$. 

\bigskip 

Each internal node $p$ in the tree is labeled as $\otimes$ or $\oplus$. 
Let $e_1$ and $e_2$ be the two edges that connect $p$ with its 
children. Write $Q_1$ and $Q_2$ for the 
twinsets at $e_1$ and $e_2$. 
If the label of $p$ is $\otimes$ then all vertices 
of $Q_1$ are adjacent to all vertices of $Q_2$. 
If the label is $\oplus$ then no vertex of $Q_1$ is 
adjacent to any vertex of $Q_2$. 

\bigskip 

Let $e$ be the edge that connects $p$ with its parent. 
The twinset $Q_e$ is either 
\[Q_1 \quad\text{or}\quad Q_2 \quad\text{or}\quad 
Q_1 \cup Q_2 
\quad\text{or}\quad \es.\] 
The distance-hereditary graphs are exactly the graphs of 
rankwidth one. The decomposition tree above describes  
a rank-decomposition of width one. 

\bigskip 

\begin{theorem}
There exists an $O(n^3)$ algorithm that computes  
the independence domination number for distance-hereditary graphs. 
\end{theorem}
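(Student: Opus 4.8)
The plan is to run a bottom-up dynamic program over the decomposition tree $(T,f)$ of the distance-hereditary graph $G$. The key insight is that, because of the twinset structure, the interaction of a subgraph $W_e$ with the rest of the graph is entirely mediated by the twinset $Q_e$: every vertex of $V\setminus W_e$ that sees $W_e$ sees \emph{all} of $Q_e$, and nothing else in $W_e$. Consequently, when we process the edge $e$, the only information we need to carry upward about the partial solutions living inside $W_e$ is summarised by a small number of quantities. Concretely, for the independent-set side I would track, for each node, the best (largest $\gamma$-requirement) of an independent set $I\subseteq W_e$ under each of a constant number of ``boundary conditions'': whether $I$ contains a vertex of $Q_e$ or not, and — crucially for the domination side — whether $I$ can be dominated ``for free'' from outside (because a single outside vertex adjacent to $Q_e$ would cover the part of $I$ inside $Q_e$). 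Symmetrically, for the domination side I would track how cheaply a vertex set can dominate a given $I\cap W_e$, distinguishing whether the dominators are required to lie inside $W_e$ or whether one unit of domination is ``imported'' through $Q_e$.

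The main steps, in order, are: (1) compute the decomposition tree in linear or near-linear time (standard, citing \cite{kn:oum}); (2) define the DP table at each tree edge $e$ as a function mapping a constant-size ``interface state'' to an integer — the maximum, over independent sets $I\subseteq W_e$ compatible with that state, of the minimum number of dominators (again compatible with the state) needed to dominate $I$; (3) give the leaf initialisation (a single vertex: $I=\es$ or $I=\{v\}$, dominated by one vertex); (4) give the combination rule at an internal node $p$ with children edges $e_1,e_2$ and parent edge $e$, splitting into the $\otimes$ case and the $\oplus$ case. In the $\oplus$ case an independent set of $W_p$ is a disjoint union of independent sets of $W_{e_1}$ and $W_{e_2}$, and since no edges cross, the domination cost is additive — the combination is essentially a sum, taking the worst $I$ on each side. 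In the $\otimes$ case the edges between $Q_1$ and $Q_2$ mean an independent set cannot simultaneously hit $Q_1$ and $Q_2$; but these same edges let a single vertex of $Q_1$ dominate all of $I\cap Q_2$, which is exactly why the boundary flag ``$I$ meets $Q_e$'' must be threaded through: it records the possibility of cheap cross-domination. Finally, (5) read off $\gamma^i(G)$ at the root, where $Q_{\mathrm{root}}=\es$ forces all dominators to be internal.

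The hard part will be getting the $\otimes$-combination rule exactly right. The subtlety is a min–max interplay: $\gamma^i$ asks for the \emph{maximum} over independent sets $I$ of the \emph{minimum} dominating cost, and in the join case the minimiser (the domination side) can exploit the cross edges while the maximiser (the independent-set side) cannot use both twinsets — but the maximiser \emph{can} choose which twinset to meet, or to meet neither, in order to deny the cheap cross-domination. Making the interface state rich enough to capture ``an outside vertex adjacent to $Q_e$ would/would not kill the need for an internal dominator'' — while still keeping it of constant size so the table has constant width — is where the care lies; I expect to need a flag such as ``$I\subseteq W_e$ is dominated by its internal dominators except possibly for $I\cap Q_e$'', together with the count of internal dominators used. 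One must also verify that a worst independent set for $W_e$ genuinely restricts to worst independent sets on the two children under the relevant interface states, so that the DP recurrence is sound (no cross-term is missed and none is double-counted).

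**Complexity.** With a constant number of interface states and integer entries bounded by $n$, each combination at an internal node costs $O(1)$ table lookups, but merging the domination counts (a min-plus style convolution over the two children's counts, as in subset-sum/knapsack-type DPs on trees) costs $O(n^2)$ per node in the worst case, and there are $O(n)$ nodes, giving the claimed $O(n^3)$ bound; I would remark that a sharper analysis or a smarter convolution may improve this, but $O(n^3)$ suffices for the statement.
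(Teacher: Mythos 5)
Your architecture coincides with the paper's: compute the decomposition tree, run a bottom-up dynamic program in which the only information carried across a tree edge $e$ is a constant number of interface flags recording how the partial independent set $A$ and its partial dominating set $D$ meet the twinset $Q_e$ (whether $A$ meets $Q_e$, whether vertices of $A\cap Q_e$ are left undominated for later, whether $D$ uses a vertex of $Q_e$), together with cardinality information; combine additively at $\oplus$-nodes and via single-vertex cross-domination between twinsets at $\otimes$-nodes; spend $O(n^2)$ per node on the convolution over cardinalities for $O(n^3)$ in total. These are exactly the ingredients of the paper's (equally sketchy) proof, so there is no methodological difference to report.

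There is, however, one step that fails as literally written, and it is precisely the one you flag but do not resolve. In step (2) you let the table entry for an interface state be a \emph{single integer}, the maximum over compatible independent sets $I$ of the minimum domination cost of $I$. This quantity does not compose. At a combination node the true value is $\max_{I_1,I_2}\min_{(s_1,s_2)}\bigl(\gamma_{s_1}(I_1)+\gamma_{s_2}(I_2)\bigr)$, the inner minimum ranging over admissible pairs of child states, and this is in general strictly smaller than what one obtains by maximising inside each child first. Concretely, if child~$1$ has independent sets $I_1^a,I_1^b$ whose minimum-cost profiles over two admissible states are $(5,1)$ and $(1,5)$, and child~$2$ likewise, then every combined independent set has true cost at most $6$, while combining the per-state maxima (all equal to $5$) reports $\min(5+5,\,5+5)=10$. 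So the ``worst'' independent set of $W_e$ need not restrict to worst independent sets of the children, and the verification you postpone would not go through. The repair is the one the paper adopts and the one your own complexity paragraph silently presupposes when it invokes a min-plus convolution over counts: index the table additionally by the cardinality $a$ of $A$ and the cost $g$ of $D$ (more carefully, record for each $a$ which vectors of minimum costs, one coordinate per interface state, are realised by some independent set of size $a$), storing booleans rather than a single optimum, and extract $\gamma^i(G)$ only at the root, where $Q_e=\es$ collapses the states. The table remains of polynomial size (constantly many states, costs bounded by $n$), so the claimed $O(n^2)$ per node and $O(n^3)$ overall are unaffected.
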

\begin{proof}
The decomposition tree can be computed in linear 
time~\cite{kn:damiand}. 
Let $e$ be an edge in the decomposition tree. Let $W_e$ be 
the set of vertices that are mapped to the leaves in the subtree 
and let $Q_e$ be the twinset, ie, the set of vertices in $W_e$ 
that have neighbors in $V \setminus W_e$.

\medskip 

\noindent
The algorithm computes a table for each edge $e$ 
in the decomposition tree. 
We write $H=G[W_e]$.  
For every pair of integers $a,g \in \{1,\dots,n\}$ the table stores 
a boolean value which is {\sc true} if there exists an 
independent set $A$ in $H$ with $|A|=a$ 
of which every vertex is dominated 
by a collection $D$ vertices in $H$ with $|D|=g$, 
except, possibly, some vertices 
in $A \cap Q_e$ (which are not dominated). 
The same table entry contains 
a boolean parameter which indicates whether 
there are vertices in $A \cap Q_e$ that 
are not dominated by the set $D$. A third boolean parameter  
indicates whether $D \cap Q_e$ is empty or not. Finally, a 
fourth boolean parameter stores whether some vertices of $D \cap Q_e$ 
dominate some vertices in $A \cap (W_e \setminus Q_e)$.   

\medskip 

\noindent
The information is conveniently stored in a symmetric 
$6 \times 6$ matrix. 
The rows and columns are partitioned according 
to the subsets 
\[A, \quad D, \quad 
A \cap Q_e, \quad D \cap Q_e, \quad A \cap (W_e \setminus Q_e) \quad 
\text{and}\quad D \cap (W_e\setminus Q_e). \] 
The diagonal entries indicate whether the subset is empty or not, 
and the off-diagonal entries indicate whether the subset of $D$ 
either completely dominates all the vertices,  
or partly dominates some of the vertices, 
or does not dominate any vertex of the subset of $A$. 

\medskip 

\noindent
We describe shortly some cases that illustrate how a table 
for an edge $e$ is computed. 
Consider a join operation at a node $p$. Let $e_1$ and 
$e_2$ be the two edges that connect $p$ with its children. 
An independent set $A$ in $G[W_e]$ can have vertices only in 
one of the two twinsets $Q_1$ and $Q_2$. 
Consider the case where $Q_e = Q_2$. When $Q_1$ has 
vertices in the independent set $A$ which are not dominated by 
vertices in $D_1$, then these 
vertices have to be dominated by a vertex from 
$Q_2$. In case of a join operation, any (single) 
vertex of $Q_2$ can do  
the job. When a dominating set $D_2$ has vertices 
in $Q_2$ then this vertex dominates $A \cap Q_1$. 
Otherwise, a new vertex of $Q_2$ 
needs to be added to the dominating set.  

\medskip 

\noindent
It is easy to check that a table as described above 
can be computed for each edge $e$  
from similar tables stored at the two children of $e$. 
For brevity we 
omit further details. The independence number 
can be read from the table at the root. 
\qed\end{proof}

\bigskip 

\begin{remark}
It is easy to see that this generalizes to graphs 
of bounded rankwidth. As above, let $(T,f)$ be a decomposition tree. 
Each edge $e$ of $T$ partitions the vertices of $G$ into two sets. 
The cutmatrix of $e$ is the submatrix of the adjacency matrix 
that has its rows indexed by the vertices in one part of the partition 
and its columns indexed by the vertices in the other part of the 
partition. A graph has rankwidth $k$ if the rank over $GF[2]$ of every 
cutmatrix is at most $k$. For example, when 
$G$ is distance hereditary, then every edge in the 
decomposition tree has a cutmatrix with a shape 
$\bigl( \begin{smallmatrix} J & 0 \\ 0 & 0 \end{smallmatrix}\bigr )$  
where $J$ is the all-ones matrix. Thus every cutmatrix has rank one.  
When a graph has bounded rankwidth then 
the twinset $Q_e$ of every edge $e$ has a partition into a bounded number 
of subsets. The vertices within each subset have the same 
neighbors in $V \setminus W_e$~\cite{kn:kloks5}. A rank-decomposition 
tree of bounded width can be obtained in $O(n^3)$ time~\cite{kn:oum}.     
\end{remark}

\section{Permutation graphs}
\label{section perm}

Another class of graphs that contains the cographs is the 
class of permutation graphs~\cite{kn:golumbic}. 

\bigskip 

A permutation diagram 
consists of two horizontal lines in the plane and a collection 
of $n$ line segments, each connecting a point on the 
topline with a point on the bottom line. A graph is a 
permutation graph if it is the intersection graph of 
the line segments in a permutation diagram. 

\bigskip 

In their paper Baker, Fishburn and Roberts characterize permutation 
graphs as follows~\cite{kn:baker2}. (See also~\cite{kn:dushnik}; 
in this paper the authors characterize permutation graphs as 
interval containment graphs). 

\begin{theorem}
A graph $G$ is a permutation graph if and only if 
$G$ and $\Bar{G}$ are comparability graphs. 
\end{theorem}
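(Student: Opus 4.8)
The plan is to prove both directions of the equivalence by exploiting the geometry of the permutation diagram. For the forward direction, suppose $G$ is a permutation graph with a fixed permutation diagram. I would first orient each line segment downward, say from its endpoint on the topline to its endpoint on the bottomline, and use the left-to-right order of endpoints on the topline to define a relation. Concretely, for two non-intersecting segments $s_1$ and $s_2$, one lies entirely to the left of the other in the sense that the topline endpoint of $s_1$ precedes that of $s_2$ \emph{and} the bottomline endpoint of $s_1$ precedes that of $s_2$; declaring $s_1 \prec s_2$ in that case gives a partial order (transitivity is the routine check: it follows because both coordinate orders are total orders). Two segments are comparable under $\prec$ precisely when they do not cross, i.e.\ when the corresponding vertices are non-adjacent in $G$, so $\prec$ is a transitive orientation of $\Bar{G}$, witnessing that $\Bar{G}$ is a comparability graph. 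For $G$ itself, I would use the \emph{other} natural order: two segments cross exactly when the topline order and the bottomline order disagree on them; orienting each edge of $G$ from the segment with the smaller topline endpoint to the one with the larger topline endpoint (equivalently, larger bottomline endpoint) is again transitive, so $G$ is a comparability graph too.

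For the converse, suppose both $G$ and $\Bar{G}$ admit transitive orientations $F_1$ (on $E(G)$) and $F_2$ (on $E(\Bar{G})$). The key observation is that $F_1 \cup F_2$ is a transitive orientation of the complete graph on $V(G)$, hence a total order $<_1$ on the vertices; and $F_1 \cup F_2^{-1}$ (reversing the $\Bar G$ edges) is \emph{also} a transitive orientation of the complete graph, giving a second total order $<_2$. I would then build the permutation diagram by placing the vertices on the topline in the order $<_1$ and on the bottomline in the order $<_2$, connecting each vertex to its own copy. Two segments $u,v$ cross in this diagram iff $<_1$ and $<_2$ order the pair oppositely, which happens iff the edge $uv$ is in $F_1$ on one line and in $F_1$ on the other while the $\Bar G$-orientation was flipped — unravelling the definitions, this is exactly the case $uv \in E(G)$. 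Hence the intersection graph of the diagram is $G$.

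The main obstacle — and the step deserving the most care — is verifying that $F_1 \cup F_2$ and $F_1 \cup F_2^{-1}$ are genuinely \emph{transitive} orientations of the complete graph, not merely orientations. Transitivity can fail at a "mixed" triple with one edge from $G$ and two from $\Bar G$ (or vice versa): one must rule out a directed path $a \to b \to c$ with $a \not\to c$. This is the classical lemma underlying the theorem, and the clean way to handle it is to invoke the defining property of comparability graphs phrased via the $\Gamma$-relation (forcing of edge orientations): a graph is a comparability graph iff every "$\Gamma$-implication class" of edges can be oriented without contradiction, and when both $G$ and $\Bar G$ have this property the implication classes of $K_n$ partition into those of $G$ and those of $\Bar G$, so a consistent orientation of $K_n$ exists. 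I would cite Ghouila-Houri or Gilmore–Hoffman for the transitive-orientation machinery and then only need to check the combinatorics of how $G$-triples and $\Bar G$-triples interact, which is a short case analysis on how many of the three edges of a triangle lie in $G$.
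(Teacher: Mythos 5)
The paper does not prove this statement at all: it is quoted as a known characterization of Baker, Fishburn and Roberts (and, in the interval-containment formulation, Dushnik--Miller), and the text only recalls the construction of the diagram from the two transitive orientations. So you are supplying a proof where the paper supplies a citation; your argument is the standard textbook one (as in Golumbic), and its skeleton is sound. The forward direction is correct and complete. For the converse, the crux you isolate is the right one: for \emph{any} transitive orientations $F_1$ of $G$ and $F_2$ of $\Bar{G}$, both $F_1\cup F_2$ and $F_1^{-1}\cup F_2$ are transitive tournaments. This is proved by exactly the triangle case analysis you mention at the end (the only nontrivial cases are $a\to b\in F_1$, $b\to c\in F_2$ and the symmetric one, each settled by a one-line contradiction with transitivity of $F_1$ or $F_2$); the detour through $\Gamma$-implication classes is unnecessary and, as phrased, not quite right --- the implication classes of $K_n$ are all singletons, so that framing does not by itself give transitivity of the particular orientation $F_1\cup F_2$.

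There is one concrete error in the converse: you take the bottom-line order to be $F_1\cup F_2^{-1}$. With top order $F_1\cup F_2$ and bottom order $F_1\cup F_2^{-1}$, the two total orders agree on every pair in $E(G)$ (both are ordered by the unchanged $F_1$) and disagree exactly on the pairs in $E(\Bar{G})$; hence segments cross precisely for \emph{non}-edges of $G$, and your diagram realizes $\Bar{G}$, not $G$. Your own verification sentence (``this is exactly the case $uv\in E(G)$'') therefore does not follow from your construction. The fix is to reverse $F_1$ rather than $F_2$, i.e.\ to use $F_1^{-1}\cup F_2$ on the bottom line, which is what the paper's construction does. Since permutation graphs are closed under complementation one could also repair the argument by applying it to $\Bar{G}$, but as written the step fails.
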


Assume that $G$ and $\Bar{G}$ are comparability graphs. 
Let $F_1$ and $F_2$ be transitive orientations of $G$ and 
$\Bar{G}$. A permutation diagram for $G$ is obtained by ordering 
the vertices on the topline by the total order $F_1 \cup F_2$ 
and on the bottom line by the total order $F_1^{-1} \cup F_2$. 
Permutation graphs can be recognized in 
linear time. The algorithm can be used to produce a 
permutation diagram in linear time~\cite{kn:tedder}. 

\bigskip 

Consider a permutation diagram for a permutation 
graph $G$. An independent set $M$ in $G$ corresponds with a 
collection of parallel line segments. The line segments of 
vertices in $M$ are, therefore, linearly ordered,  
say left to right. 
 
\begin{definition}
Consider a permutation diagram. An independent set $M$ 
ends in $x$ if the line segment of $x$ is the right-most 
line segment of vertices in $M$. 
\end{definition}

\begin{definition}
For $x \in V$ and $k \in \mathbb{N}$, let $\mathcal{M}(x;k)$ be the 
collection of independent sets $M$ that end in $x$ and 
for which $\gamma(M)=k$. 
\end{definition}

\begin{definition}
Let $\Gamma(x;k)$ be the collection 
of minimum dominating sets for independent sets $M$ that end 
in $x$ with $\gamma(M)=k$. 
\end{definition}

\bigskip 

The line segments of the neighbors of a vertex $x$ 
are crossing the line segment of $x$. We say that $z$ 
is a rightmost neighbor of $x$ satisfying a certain 
condition, if the endpoint 
of $z$ on either the topline or the bottom line is rightmost 
among all neighbors of $x$ that satisfy the condition. 
Here, we allow that $z=x$. 

\bigskip 

Let $x \in V$ and let $z \in N[x]$. Define 
\begin{equation}
\label{eqn1}
\gamma_x(z)=\{\;k\;|\; \text{$z$ is a right-most neighbor of $x$ 
and $z \in \Gamma$ for some $\Gamma \in \Gamma(x;k)$}\;\}
\end{equation}

\bigskip 

\begin{lemma}
Let $G$ be a permutation graph and consider a permutation diagram 
for $G$. Then 
\begin{equation}
\label{eqn2}
\gamma^i(G)=\max \;\{\;k\;|\; k \in \gamma_x(z) \quad x \in V \quad 
z \in N[x]\;\}.
\end{equation}
\end{lemma}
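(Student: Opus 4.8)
The plan is to show the two inequalities $\gamma^i(G) \leq \max\{k \mid k \in \gamma_x(z),\ x \in V,\ z \in N[x]\}$ and $\gamma^i(G) \geq \max\{k \mid \dots\}$ separately, where both sides are manifestly finite since $k$ ranges over dominating-set sizes. The key observation to establish first is a normalization: every independent set $M$ in $G$, viewed in the fixed permutation diagram, is linearly ordered left-to-right by its line segments, so $M$ ends in a well-defined vertex $x$, and hence $M \in \mathcal{M}(x;k)$ for $k = \gamma(M)$. Therefore $\gamma^i(G) = \max_x \max\{k \mid \mathcal{M}(x;k) \neq \es\}$, and it remains only to relate membership in $\mathcal{M}(x;k)$ to the quantities $\gamma_x(z)$.

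For the inequality $\gamma^i(G) \leq \max\{k \mid \dots\}$, I would take an independent set $M$ ending in $x$ with $\gamma(M) = \gamma^i(G) =: k$, and take any $\Gamma \in \Gamma(x;k)$, i.e.\ a minimum dominating set of $M$ of size $k$. Since $x \in M$, some vertex $z \in \Gamma$ dominates $x$, so $z \in N[x]$. The remaining point is that among the vertices of $\Gamma$ that dominate $x$, one may assume a canonical choice is a \emph{rightmost} neighbor of $x$: if $z' \in \Gamma \cap N[x]$ is not rightmost, I would argue we can exchange $z'$ for the rightmost neighbor $z$ of $x$ without increasing $|\Gamma|$ and without losing domination of $M$ — here one uses that $M$ lies entirely to the left of (or at) $x$ in the diagram, so a neighbor of $x$ whose endpoints are pushed as far right as possible can only gain neighbors among vertices whose segments are "between" it and the earlier part of $M$; this is the standard sweep/exchange argument for permutation diagrams, and I expect it to require a careful case analysis on whether the rightward endpoint is on the top or the bottom line. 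Granting the exchange, $z$ is a rightmost neighbor of $x$ lying in some minimum dominating set $\Gamma$ of some $M \in \mathcal{M}(x;k)$, so $k \in \gamma_x(z)$, giving $\gamma^i(G) \leq \max\{k \mid \dots\}$.

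For the reverse inequality, suppose $k \in \gamma_x(z)$ for some $x$ and $z \in N[x]$. By definition of $\gamma_x(z)$ there is $\Gamma \in \Gamma(x;k)$ with $z \in \Gamma$, and by definition of $\Gamma(x;k)$ this $\Gamma$ is a minimum dominating set for some independent set $M$ ending in $x$ with $\gamma(M) = k$. Hence $M$ is an independent set of $G$ with $\gamma(M) = k$, so $\gamma^i(G) \geq k$. Taking the maximum over all valid $(x,z,k)$ yields $\gamma^i(G) \geq \max\{k \mid \dots\}$. Combining the two inequalities gives~\eqref{eqn2}.

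The main obstacle is the exchange argument in the upper-bound direction: it must be shown that restricting attention to \emph{rightmost} neighbors $z$ of the end-vertex $x$ loses nothing, i.e.\ that whenever $x$ is dominated by some member of a minimum dominating set, it is dominated by a rightmost neighbor of $x$ in \emph{some} minimum dominating set of \emph{some} independent set ending in $x$ with the same $\gamma$-value. The subtlety is that replacing $z'$ by the rightmost neighbor $z$ might fail to dominate a vertex of $M$ that $z'$ was covering; one resolves this by noting that such a vertex, being in $M$ and hence non-adjacent to $x$ and to the left of $x$, together with the geometry of the diagram forces it to also be a neighbor of $z$ (or else to be coverable by reusing the slot), so the swap is safe. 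Once this geometric lemma is in place the rest is bookkeeping.
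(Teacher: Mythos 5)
Your lower-bound direction and the overall two-inequality structure are fine (the paper leaves the easy ``$\geq$'' direction implicit), but the upper bound rests on a misreading of the definition of ``rightmost.'' The paper's convention is that $z$ is a rightmost neighbor of $x$ \emph{satisfying a condition} if its endpoint is rightmost among the neighbors of $x$ that satisfy that condition; in the definition of $\gamma_x(z)$ the condition is precisely membership in some $\Gamma\in\Gamma(x;k)$. Consequently the paper's proof needs no exchange at all: take $M$ with $\gamma(M)=\gamma^i(G)$ ending in $x$, take any minimum dominating set $\Gamma$ of $M$, observe that $N[x]\cap\Gamma\neq\es$ because $x\in M$ must be dominated, and let $z$ be the rightmost element of $N[x]\cap\Gamma$ over all such $\Gamma$. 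By definition $\gamma(M)\in\gamma_x(z)$, and the lemma follows.

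Your version instead tries to swap an arbitrary dominator $z'$ of $x$ for the \emph{globally} rightmost neighbor $z$ of $x$, and that exchange claim is false, not merely unproved. Take $M=\{y,x\}$ with $y$ to the left of $x$, let $z'$ cross both segments, and let $z$ be a segment crossing only $x$ with its endpoints far to the right of everything else: then $\gamma(M)=1$ is witnessed only by common neighbors of $y$ and $x$, and $\{z\}$ fails to dominate $y$, so no minimum dominating set of $M$ contains the globally rightmost neighbor of $x$. The sketched repair (``such a vertex \dots is forced to also be a neighbor of $z$'') does not hold in the diagram geometry. So the step you correctly flag as the main obstacle is a genuine gap --- but it is an obstacle of your own making: under the intended reading of $\gamma_x(z)$ no geometric exchange lemma is required, and with your reading the identity \eqref{eqn2} would in fact be wrong.
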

\begin{proof}
Consider an independent set $M \subseteq V$ for which 
\[\gamma(M)=\gamma^i(G).\]  
Assume that $M$ ends in $x$. Any set $\Gamma$ that dominates 
$M$ has a vertex $z \in N[x] \cap \Gamma$.   
Let $z$ be a right-most neighbor of $x$ which is in a 
dominating set $\Gamma$ for $M$ with $|\Gamma|=\gamma(M)$. Then 
\begin{equation}
\label{eqn3}
\gamma^i(G) =\gamma(M) \in \gamma_x(z).
\end{equation}
This proves the lemma. 
\qed\end{proof}

\bigskip 

\begin{theorem}
There exists a polynomial-time algorithm that computes 
$\gamma^i(G)$ for permutation graphs. 
\end{theorem}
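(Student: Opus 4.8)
The plan is to turn the characterization in the preceding lemma into a dynamic program over the linear order induced by the permutation diagram. Recall that an independent set $M$ corresponds to a set of pairwise non-crossing segments, so the vertices of $M$ are linearly ordered left to right; the lemma says that $\gamma^i(G)$ is the largest $k$ for which some $\gamma_x(z)$ contains $k$, where $z$ is a rightmost neighbor of $x$ lying in a minimum dominating set of some $M$ ending in $x$. So it suffices to compute, for every vertex $x$ and every $z \in N[x]$, the set $\gamma_x(z) \subseteq \{1,\dots,n\}$ (or at least its maximum). First I would fix the permutation diagram (computable in linear time by~\cite{kn:tedder}) and process the vertices $x$ in left-to-right order of, say, their topline endpoints. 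The state of the DP is a pair $(x,z)$ with $z \in N[x]$ playing the role of ``the rightmost dominator chosen so far, which is responsible for covering the current last element $x$ of $M$''; the value stored is the set of achievable values of $\gamma(M)$ over independent sets $M$ ending in $x$ whose minimum dominating set can be realized with $z$ as a rightmost dominator.

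The transition extends $M$ by appending a new rightmost vertex $x'$ that must lie entirely to the right of and not cross $x$ (so $x'$ is independent from all of $M$). There are two cases. Either $x'$ is already dominated by the current rightmost dominator $z$ — i.e.\ $x' \in N[z]$ — in which case $\gamma(M \cup \{x'\}) = \gamma(M)$ and, after possibly updating which neighbor of $x'$ is ``rightmost among chosen dominators,'' we move to a state $(x', z')$ with the same $k$; or $x'$ is not dominated by $z$, in which case we must spend one new dominator $w \in N[x']$, giving $\gamma(M\cup\{x'\}) = \gamma(M)+1$ and moving to state $(x', w)$. To make this sound we have to be careful that the greedy ``always keep the rightmost chosen dominator'' bookkeeping does not lose any optimal dominating set: because a dominator can only cover vertices of $M$ whose segments it crosses, and the vertices of $M$ to the left of $x$ have already been accounted for, the only thing a future dominator choice affects is whether the next appended vertex is covered for free, and that depends only on the rightmost endpoint of the dominators picked so far — this is exactly the reason the single coordinate $z$ is a sufficient summary. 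Initialization is $\gamma_x(z) \ni 1$ for the singleton $M=\{x\}$ with any $z\in N[x]$ chosen as rightmost dominator, and at the end we read off $\max\{k : k\in\gamma_x(z),\ x\in V,\ z\in N[x]\}$, which by the lemma equals $\gamma^i(G)$.

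For the running time: there are $O(n^2)$ states $(x,z)$ and for each we iterate over the $O(n)$ choices of the next rightmost vertex $x'$ and, in the second case, over $O(n)$ choices of new dominator $w$, and we manipulate a set of at most $n$ integers; a crude bound is $O(n^4)$ or $O(n^5)$, which is polynomial and all the theorem asserts. One can tighten this by observing that for fixed $(x,z)$ we only need the maximum achievable $k$ rather than the whole set, provided monotonicity holds — appending vertices can only keep $k$ the same or increase it — so a single integer per state suffices and the bound improves; I would mention this but not optimize further. The main obstacle I anticipate is the correctness argument for the one-coordinate state, namely proving that restricting attention to the \emph{rightmost} already-chosen dominator and to independent sets that are built strictly left to right does not exclude any pair $(M,\Gamma)$ realizing $\gamma(M)=\gamma^i(G)$ with $\Gamma$ a minimum dominating set; this requires an exchange-type argument showing that in an optimal $\Gamma$ one may always assume each dominator is chosen as far right as possible subject to covering its assigned portion of $M$, together with the geometric fact that a segment crosses $x$'s segment iff its relevant endpoint lies on the correct side, so ``rightmost endpoint'' is the only feature of the past that matters for the future. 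Once that lemma is in place, the DP and its polynomial bound are routine.
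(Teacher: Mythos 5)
Your proposal matches the paper's proof: both run a left-to-right dynamic program over the permutation diagram whose state is the pair consisting of the last vertex $x$ of the independent set and a rightmost dominator $z \in N[x]$ covering it, with the same two transition cases (the newly appended rightmost vertex is already covered by $z$, or one fresh dominator must be spent). The paper merely phrases the recurrence backwards via the predecessor $y$ of $x$ in $M$, and it is no more explicit than you are about the exchange argument justifying the one-coordinate summary, so your account is essentially the paper's argument.
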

\begin{proof}
We describe the algorithm to compute $\gamma_x(z)$. 
We assume that for every non-neighbor $y$ of $x$ 
that is to the left of 
$x$, the sets $\gamma_y(z^{\prime})$ for $z^{\prime} \in N[y]$ 
have been computed. 

\medskip 

\noindent
Consider an independent set $M \in \mathcal{M}(x;k)$. 
Let $z \in N[x]$ be a rightmost 
neighbor of $x$ such that there is a dominating set 
$\Gamma \in \Gamma(x;k)$ with $z \in \Gamma$. 
Let $y \in M$ lie immediately to the left of $x$. 
When $z \in N(y)$ then $z$ must be a rightmost neighbor 
of $y$. In that case 
\begin{equation}
\label{eqn6}
k \in \gamma_x(z) \quad\Leftrightarrow\quad k \in \gamma_y(z).
\end{equation}

\medskip 

\noindent
Now assume that $z \notin N(y)$. Then $z$ dominates only one 
vertex of $M$, namely $x$. In that case $z$ must be a right-most 
neighbor of $x$ which is not in $N(y)$ and, if that is the case,  
\begin{equation}
\label{eqn7}
k \in \gamma_x(z) \quad \Leftrightarrow \quad 
\exists_{z^{\prime} \in N[y]\setminus N(x)} \; k-1 \in \gamma_y(z^{\prime}).
\end{equation}

\medskip 

\noindent
This proves the theorem. 
\qed\end{proof}
 
\section{Bounded treewidth}
\label{section tw}

Graphs of bounded treewidth were introduced by Halin~\cite{kn:halin}. 
They play a major role in the research on graph minors~\cite{kn:robertson}. 
Problems that can be formulated in monadic second-order 
logic can be solved in linear time for graphs of 
bounded treewidth.  
Graphs of bounded treewidth can be recognized in 
linear time~\cite{kn:bodlaender,kn:kloks3}. Actually, 
bounded treewidth itself can be formulated in monadic 
second-order logic via a finite collection of 
forbidden minors~\cite{kn:courcelle}. 

\begin{definition}
Let $k \in \mathbb{N}$. A graph $G$ has treewidth at most $k$ 
if $G$ is a subgraph of a chordal graph $H$ with $\omega(H) \leq k+1$. 
\end{definition}

\begin{theorem}
Let $k \in \mathbb{N}$. 
There exists an $O(n^3)$ algorithm to compute 
$\gamma^i(G)$ when the treewidth of $G$ is at most $k$. 
\end{theorem}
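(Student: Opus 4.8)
The plan is to run the standard dynamic programming over a nice tree decomposition, but with a state space rich enough to track what the two sets — an independent set $A$ and a dominating set $D$ for $A$ — look like relative to each bag. First I would take a nice tree decomposition $(T,\{X_t\})$ of width $k$, computable in linear time, and root it. At each node $t$ let $G_t$ be the subgraph induced on the vertices introduced in the subtree below $t$. The key subtlety, exactly as in the distance-hereditary case, is that an independent set $A$ realizing $\gamma^i(G)$ need not be dominated ``from inside'' $G_t$: vertices of $A$ sitting in the current bag may only get dominated later, by vertices introduced higher up. So the DP must allow bag-vertices of $A$ to be temporarily undominated, and bag-vertices of $D$ to have been chosen already even though they dominate things not yet introduced.

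Concretely, the state at node $t$ is a pair $(f,g)$ where $f\colon X_t \to \{a,\ d,\ d^{-},\ \bar a\bar d\}$ labels each bag vertex as: in $A$ and already dominated; in $D$; in $A$ but not yet dominated; or in neither. (A vertex of $A$ that is in $D$ can be folded in by noting $N[x]\ni x$.) For each such labelling, and each value $g\in\{0,\dots,n\}$, the table stores the maximum size of an independent set $A$ in $G_t$ consistent with the labelling such that $A$ minus its ``not yet dominated'' bag vertices is dominated by a set $D$ in $G_t$ with $|D|=g$ (with $D\cap X_t$ exactly the vertices labelled $d$). The transitions are routine: at an \emph{introduce} vertex $v$ we branch over the four labels for $v$, checking independence of $A$ against bag-neighbours and updating domination status of $v$ and of bag-neighbours of $v$ when $v$ is put in $D$; at a \emph{forget} vertex $v$ we require that if $v\in A$ it is \emph{not} labelled ``not yet dominated'' (it can never be dominated again), and we project the label out; at a \emph{join} node we merge two child states with the same restriction to $X_t$, adding the $g$-values after subtracting the double-counted $|D\cap X_t|$, and declaring a bag vertex of $A$ dominated if it was dominated in either branch. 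At the root (empty bag) we read off $\gamma^i(G)=\max\{\,g : \text{the entry with the largest }A\text{ is realized with that }g\,\}$ — more precisely we compute, for the optimal $A$ of each attainable $\gamma$ value, the minimum $g$, and take the maximum $\gamma$ over all of them; since $\gamma(A)$ for fixed $A$ is a minimum, storing ``max $|A|$ for each $(f,g)$'' and then minimizing $g$ at the root over labellings with no remaining undominated $A$-vertices recovers exactly $\max_A \gamma(A)$.

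The number of states per node is $4^{k+1}\cdot(n+1) = O(n)$ for fixed $k$, each join combines $O(n)$ pairs in $O(n)$ time after the usual convolution, giving $O(n^2)$ per node and $O(n^3)$ overall, matching the claimed bound. The main obstacle — and the point that needs care in the write-up — is the interplay between the \emph{maximization} over $A$ and the \emph{minimization} over $D$: one must argue that carrying ``max $|A|$'' indexed by the exact dominator count $g$ (rather than a single scalar) is enough to reconstruct $\gamma(M)=\gamma_G(M)$ for the extremal $M$, i.e.\ that an optimal global solution decomposes into states whose $g$-values add up correctly across joins without an optimal dominator being ``split'' suboptimally. This is handled by the standard exchange argument: fix an optimal pair $(M,D)$ with $M$ independent, $\gamma(M)=\gamma^i(G)$, $|D|=\gamma(M)$; its restriction to each $G_t$ yields a feasible state, and conversely any family of compatible states glues to a global feasible pair, so the root value equals $\gamma^i(G)$. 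I would also remark, as the paper does for rankwidth, that this is just the MSO$_1$/MSO$_2$-style ``two monadic second-order set variables with a counting objective'' framework, so Courcelle-type machinery gives the result abstractly; the explicit DP above is included to make the $O(n^3)$ running time concrete.
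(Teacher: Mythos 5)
Your table is essentially the one in the paper (your four-way labelling is the paper's white/gray status on $A\cap S_i$ together with membership in $D$, and your index $g=|D|$ is the paper's $q$), and you are right to single out the interplay between the maximization over $A$ and the minimization over $D$ as the one point that needs care. But the argument you offer there does not close the gap. The exchange argument shows only that the table is a sound and complete encoding of the \emph{relation} ``there exist an independent set $A$ and a set $D$ with the given interface and sizes such that $D$ dominates the required part of $A$''; it does not show that $\max_A\min_D|D|$ can be read off from it. The aggregation $(f,g)\mapsto\max|A|$ (equivalently, the boolean $b(p,q,A)$) merges independent sets with different domination numbers, which is exactly the information you need to keep. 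Take the path $P_4$ with vertices $a,b,c,d$ in order (treewidth $1$): here $\gamma^i=2$, witnessed by $\{a,d\}$, but at the root the entry for $|A|=2$, $g=1$ is already realized by $\{a,c\}$ with $D=\{b\}$, so each of the read-offs you propose (``minimize $g$ over labellings with no undominated $A$-vertices'', or ``the largest $g$ realizing the largest $A$'') returns $1$ or $n$, never $2$; no entry of the table records that \emph{some} independent set of size two requires two dominators. The closing appeal to Courcelle-type machinery does not rescue this, since $\exists M\,\forall D\,\bigl(|D|<t\Rightarrow D\text{ does not dominate }M\bigr)$ with unbounded $t$ is a nested max--min, not a single linear objective optimized over one MSO-definable relation.

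To make the dynamic program correct you must not collapse the inner minimization into a single integer $g$ quantified existentially. For a \emph{fixed} partial independent set $M$, the minimum size of a partial dominating set is a function of the $D$-side interface (which bag vertices are in $D$, and which $M$-vertices of the bag are already dominated); the outer state therefore has to carry, for each class of partial $M$'s with the same bag trace, the entire vector of these minima, and two partial $M$'s may be identified only when their vectors agree (or one is pointwise dominated from the maximizer's point of view). That yields a polynomial-time algorithm for each fixed $k$, but the number of such vectors is of order $(n+2)^{3^{k+1}}$, so the exponent depends on $k$ and recovering the stated uniform $O(n^3)$ bound requires a further idea. As written, your proof (like the sketch it parallels) is missing precisely this step, and the root read-off you describe is incorrect.
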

\begin{proof}
Consider a tree-decomposition for $G$ with bags of size 
at most $k+1$~\cite{kn:kloks3,kn:kloks2}. 
Consider a subtree rooted at a node $i$. 
Denote the bag at node $i$ by $S_i$. Denote the 
subgraph of $G$ induced by the vertices that appear in bags in the 
subtree rooted at $i$ by $G_i$. 
We use a technique similar to the one used in, 
eg,~\cite{kn:telle,kn:telle3}.  

\medskip 

\noindent
For all the subsets $A \subseteq S_i$, 
and for all pairs of integers $p$ and $q$, let $b(p,q,A)$ 
denote a boolean value which is true if there 
exists an independent set $M$ in $G_i$ with $p$ vertices 
with $M \cap S_i=A$. The vertices of $A$ have a status, which is 
either white or gray. The white vertices of $A$ are dominated 
by a set of $q$ vertices in $G_i$ and the gray vertices are not 
dominated by vertices in $G_i$. 

\medskip

\noindent
It is easy to see that the boolean values can be computed in 
$O(n^2)$ time by dynamic programming for each node in the 
decomposition tree. 
\qed\end{proof}
 
\section{An exact exponential algorithm}
\label{section exponential}

In this section we describe an exact, exponential 
algorithm to compute the independence domination 
number~\cite{kn:fomin,kn:kloks2}. 

\bigskip 

\begin{theorem}
There exists an $O^{\ast}(1.7972^n)$ algorithm to 
compute the independence domination number. 
\end{theorem}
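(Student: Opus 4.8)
The plan is to reduce $\gamma^i$ to an enumeration of maximal independent sets. Since $A\subseteq B$ implies $\gamma(A)\le\gamma(B)$ — any set dominating $B$ dominates $A$ — and every independent set extends to a maximal one, we have
\[\gamma^i(G)=\max\;\{\;\gamma(A)\;|\;\text{$A$ is a maximal independent set of $G$}\;\}.\]
By the Moon--Moser bound $G$ has at most $3^{n/3}$ maximal independent sets, and these can be listed with polynomial delay (Tsukiyama et al.), so the outer loop of the algorithm ranges over at most $3^{n/3}$ sets $A$.

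For a fixed maximal independent set $A$ the inner task is to evaluate $\gamma(A)$, the least number of closed neighborhoods $N[v]$ whose union contains $A$. As $A$ is independent we have $N[a]\cap A=\{a\}$ for each $a\in A$; hence, apart from the isolated vertices of $G$ lying in $A$ (which must dominate themselves), every useful dominator lies in $V\setminus A$, where a vertex $v$ covers $N(v)\cap A$. Writing $A'$ for the non-isolated vertices of $A$ one gets $\gamma(A)=|\{\text{isolated vertices of $G$ in $A$}\}|+\min_{D\subseteq V\setminus A}\bigl(|D|+|A'\setminus N(D)|\bigr)$, which is a Minimum Set Cover instance with ground set $A'$ (at most $|A|$ elements) and sets indexed by $V\setminus A$. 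This can be solved by the textbook subset dynamic program in $O^{\ast}(2^{|A|})$, by enumerating subfamilies in $O^{\ast}(2^{|V\setminus A|})$, or — and this is what the bound needs — by a moderately exponential Set Cover algorithm whose running time is governed by the instance size (number of elements plus number of sets), which here is at most $n$.

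The total cost is then $\sum_A(\text{cost of the inner step on }A)$, and balancing this sum is where the work lies: one groups the maximal independent sets by their size, invokes the sharpened bound on the number of maximal independent sets of a prescribed size (for which disjoint unions of equal cliques are extremal), and weighs it against the running time of the chosen inner routine as a function of $|A|$. The delicate regime is $|A|\approx n/2$, where one has comparatively few maximal independent sets but a genuinely large Set Cover instance; this is exactly why the naive inner step — $O^{\ast}\!\bigl(2^{\min(|A|,\,n-|A|)}\bigr)$, which combined with $3^{n/3}$ is no faster than $O^{\ast}(2^n)$ — does not suffice, and the size-parametrised Set Cover algorithm must be used. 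Carrying out this optimisation yields the stated $O^{\ast}(1.7972^n)$. I expect the main obstacle to be precisely this last trade-off analysis: one has to pin down the right size-indexed count of maximal independent sets and pick an inner Set Cover routine whose exponent, as a function of instance size, is small enough that $3^{n/3}$ times it stays below $1.7972^n$; by comparison the reduction and the enumeration are routine.
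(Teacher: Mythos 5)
Your outer framework---reduce to maximal independent sets via monotonicity of $\gamma(\cdot)$ under inclusion, enumerate them within the Moon--Moser bound $3^{n/3}$ with the Tsukiyama et al.\ listing, and solve a minimum set cover instance for each one---is exactly the skeleton of the paper's proof, and your reformulation of $\gamma(M)$ as set cover over $V\setminus M$ (with self-domination absorbed as singleton sets) is sound. The gap is that the step which actually produces the number $1.7972^n$ is never carried out: you invoke an unspecified ``size-parametrised Set Cover algorithm'' and an unstated size-indexed bound on the number of maximal independent sets, and then assert that carrying out the optimisation yields the stated bound. As written, nothing in the proposal lets a reader check that the product of the two exponentials stays below $1.7972^n$; you flag this yourself as the main obstacle, which is an accurate self-assessment---the theorem \emph{is} that trade-off, so deferring it leaves the statement unproved.

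For comparison, the paper needs neither of the ingredients you defer to. It uses only the flat $3^{n/3}$ count and splits on a threshold $\beta n$ for $|M|$: if $|M|\le\beta n$ it computes $\gamma(M)$ by branching on a vertex of $V\setminus M$ with at least three neighbours in $M$ (recurrence $T(n)\le T(n-1)+T(n-4)$, giving $1.3803^{\beta n}$), the base case where every vertex of $V\setminus M$ has at most two neighbours in $M$ being solved exactly via a maximum matching; if $|M|>\beta n$ it brute-forces the $2^{|V\setminus M|}\le 2^{(1-\beta)n}$ subsets of $V\setminus M$. Balancing $1.3803^{\beta}=2^{1-\beta}$ gives $\beta=0.6827$ and $3^{n/3}\cdot 2^{(1-\beta)n}=1.7972^n$. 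Incidentally, if you want to pursue your route, the measure-and-conquer set cover algorithm in the cited Fomin--Kratsch book runs in $O^{\ast}(c^{|U|+|\mathcal{S}|})$ with $c<1.24$, and since your instance has $|A'|+|V\setminus A|\le n$, the uniform estimate $3^{n/3}\cdot 1.24^n<1.7972^n$ already suffices---no size-indexed count of maximal independent sets is needed, and the $|A|\approx n/2$ regime you single out as delicate is not actually a problem. But either way, the arithmetic has to appear in the proof for the theorem to be established.
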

\begin{proof}
Moon and Moser proved that a graph with 
$n$ vertices has at most $3^{n/3}$ maximal independent 
sets~\cite{kn:moon}. 
Tsukiyama et al. showed that all the independent sets 
can be listed with polynomial delay~\cite{kn:tsukiyama}. 

\medskip 

\noindent 
First assume that there is a maximal independent set 
with at most $\beta \cdot n$ vertices. We determine the 
constant $\beta$ later. 
Then $\gamma^i(G) \leq \gamma(G) \leq \beta \cdot n$. 

\medskip 

\noindent
For each maximal independent set $M$ of size at most $\beta \cdot n$, 
we find the smallest set that dominates it as follows. 
Remove all edges except those that connect 
$M$ and $V \setminus M$. Assume that every vertex of $V \setminus M$ 
has at most two neighbors in $M$. Then 
we can easily find $\gamma(M)$ in polynomial time via 
maximum matching. To see that, construct a graph $H$ on the 
vertices of $M$ where two vertices are adjacent if they have a common 
neighbor in $V \setminus M$. Let 
$W$ be the set of vertices in $M$ that are endpoints of 
edges in a maximum matching. Let $\nu(H)$ be the cardinality 
of a maximum matching in $H$. 
Then a solution is given by 
\begin{equation}
\label{eqn matching}
\gamma(M)=\nu(H)+ |M \setminus W|. 
\end{equation}

\medskip 

\noindent 
Otherwise, when at least some vertex 
of $V \setminus M$ has at least three 
neighbors in $M$, choose a vertex $x$ of maximal degree at least three 
in $V \setminus M$ and branch as follows. In one branch 
the algorithm removes $x$ and all its neighbors. In the other 
branch only the vertex $x$ is removed. This gives a recurrence 
relation 
\[T(n) \leq T(n-1)+T(n-4).\] 
Since the depth of the search tree 
is bounded by $\beta \cdot n$, this part of the algorithm can be solved 
in $O^{\ast}(1.3803^{\beta \cdot n})$. 

\medskip 

\noindent 
Assume that every maximal independent set has cardinality 
at least $\beta \cdot n$. In that case, we try all subsets 
of $V \setminus M$. The optimal value for $\beta$ follows 
from the equation 
\[1.3803^{\beta} = 2^{1-\beta} \quad\Rightarrow\quad 
\beta=0.6827.\]
For the timebound we find that it is polynomially equivalent to 
\[3^{n/3} \cdot 2^{(1-\beta) n} =1.7972^n.\]
\qed\end{proof}
  
\section{A PTAS for planar graphs}

In this section we show that there is a polynomial-time 
approximation scheme for planar graphs. 
We use the well-known technique of Baker~\cite{kn:baker}. 

\bigskip 

Consider a plane embedding of a planar graph $G$. Partition 
the vertices of $G$ into layers $L_1, \dots$ as follows. 
The outerface are the vertices of $L_1$. Remove the vertices 
of $L_1$. Then the new outerface are the vertices of $L_2$. 
Continue this process until all vertices are in some layer. 

\bigskip 

If there are only $k$ layers then the graph is called $k$-outerplanar. 

\begin{lemma}[\cite{kn:bodlaender2}]
The treewidth of $k$-outerplanar graphs is at most $3k-1$. 
\end{lemma}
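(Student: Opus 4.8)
The plan is to prove this by induction on the number of layers $k$. For the base case $k=1$ the graph $G$ is outerplanar, and it is well known that outerplanar graphs have treewidth at most $2=3\cdot 1-1$: every outerplanar graph is a subgraph of a triangulated polygon, and a triangulated polygon has a tree decomposition whose bags are the vertex sets of its triangular faces and whose tree is the weak dual.

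For the inductive step, let $G$ be $k$-outerplanar with $k\geq 2$, and fix a plane embedding realizing the layers $L_1,\dots,L_k$. First I would reduce to the case that $G$ is $2$-connected: the treewidth of a graph equals the maximum of the treewidths of its blocks, and each block of $G$, with the embedding inherited from $G$, is again $\ell$-outerplanar for some $\ell\leq k$, since deleting the remaining vertices of $G$ can only move vertices to lower layers. So assume $G$ is $2$-connected, so that its outer face is bounded by a cycle $C$ with $V(C)=L_1$. Then $G'=G-L_1$ is $(k-1)$-outerplanar with layers $L_2,\dots,L_k$, so by the induction hypothesis it admits a tree decomposition $(T,\{X_t\}_{t\in T})$ of width at most $3(k-1)-1$, i.e.\ with $|X_t|\leq 3k-3$ for every node $t$.

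It remains to fold the outer cycle $C$ back into this decomposition, and this is the step I expect to be the main obstacle. The key structural fact is that one may enlarge $T$ — subdividing edges and attaching paths that ``walk around'' $C$ — and add to each bag at most three vertices of $L_1$, so as to obtain a valid tree decomposition of $G$. This uses planarity essentially: after triangulating the part of the embedding lying between $C$ and $G'$ (which increases neither the number of layers nor, relevantly, the treewidth), each vertex of $L_2$ is adjacent to a contiguous arc of $C$, the arcs of consecutive vertices of $L_2$ along the boundary facing $C$ overlap consistently, and the chords of $C$ together with the edges from $L_1$ to $L_2$ therefore nest compatibly with the cyclic order on $C$. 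Tracking the (at most two) extreme vertices of the relevant arc, plus one further vertex to maintain the connectivity condition for the vertices of $C$, accounts for the ``$+3$''. The resulting bags then have size at most $(3k-3)+3=3k$, so $G$ has treewidth at most $3k-1$; the general, not-necessarily-$2$-connected case follows since, as noted, treewidth is governed by the blocks. This completes the induction.
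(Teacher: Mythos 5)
The paper does not prove this lemma at all --- it is quoted from Bodlaender's ``partial $k$-arboretum'' survey --- so the only question is whether your argument stands on its own, and as written it does not. The base case and the reductions (to blocks, and from $G$ to $G'=G-L_1$) are fine, but the entire content of the theorem is concentrated in the step you yourself flag as ``the main obstacle'', and there it is asserted rather than proved. The claim you need is: given an \emph{arbitrary} tree decomposition of $G'$ of width $3k-4$, one can subdivide/attach and add at most three vertices of $L_1$ to each bag so as to obtain a tree decomposition of $G$. This does not follow from the theorem itself (which only bounds $\mathrm{tw}(G)$, not the extendability of a given decomposition of $G'$), and the difficulty is not really the ``contiguous arc'' observation. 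The edge-coverage part is manageable by walking along arcs with attached paths; the problem is the connectedness condition for vertices of $L_1$. A vertex $u\in L_1$ may be adjacent to $L_2$-vertices $v,w$ that occur in bags at distant nodes of $T$, so $u$ must be inserted into every bag on the tree path between them; many distinct $L_1$-vertices can be forced through the same intermediate bag in this way, and nothing in your sketch bounds that number by three. The given decomposition of $G'$ can be arranged adversarially with respect to the embedding (a single bag may contain $L_2$-vertices whose arcs are scattered around $C$), so the ``arcs nest compatibly'' argument does not localize the traffic.

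To repair this you would either have to first replace the inductive decomposition of $G'$ by one that respects the cyclic structure of the embedding (at which point you are essentially rebuilding the decomposition from scratch rather than extending it, and the induction loses its force), or switch to Bodlaender's actual argument: reduce to maximum degree $3$ by a treewidth-non-decreasing, outerplanarity-preserving vertex expansion, take a suitable spanning tree $T$ of the plane graph, bound its edge remember number by $2k$ and its vertex remember number by $3k-1$, and invoke the lemma that a planar graph with such a spanning tree has treewidth at most $\max\{\mathrm{vr},\,\mathrm{er}+1\}$. That route avoids the layer-by-layer extension entirely, which is precisely why it is the one used to get the tight constant $3k-1$; the naive peeling induction, made honest, tends to lose at least an additive constant per layer beyond what you budget for.
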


\begin{theorem}
Let $G$ be a planar graph. 
For every $\epsilon > 0$ there exists a linear-time 
algorithm that computes an independence dominating 
set of cardinality at least 
\[(1-\epsilon)\cdot \gamma^i(G).\]  
\end{theorem}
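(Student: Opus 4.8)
The plan is to run Baker's layering/shifting argument, but with two modifications forced by the fact that $\gamma^i$ is a \emph{max--min} quantity. Fix $\epsilon>0$, put $k=\lceil 4/\epsilon\rceil$, and compute the layers $L_1,L_2,\dots$ of a plane embedding of $G$. For each shift $r\in\{0,\dots,k-1\}$, delete all layers $L_i$ with $i\equiv r$ or $i\equiv r+1\pmod k$. The surviving layers split into \emph{runs} of $k-2$ consecutive layers; for a run $R$ let $H_R$ be the subgraph induced by $R$ together with the deleted layer immediately below $R$ and the deleted layer immediately above $R$. Then $H_R$ spans at most $k$ consecutive layers of $G$, so it is $k$-outerplanar and, by the Bodlaender lemma quoted above, has treewidth at most $3k-1$. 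One cannot simply evaluate $\gamma^i$ on the deleted graph: deleting layers can \emph{increase} $\gamma^i$ (deleting the centre of a star turns a set of cost $1$ into a set of cost many), so $\max_r$ of $\gamma^i$ of the deleted graph is not even an upper bound on $\gamma^i(G)$. Instead, on each $H_R$ I run the bounded-treewidth dynamic program of Section~\ref{section tw}, modified only so that the chosen independent set is confined to $R$ while its dominators range over all of $H_R$, obtaining an independent set $A_R\subseteq R$ for which $\gamma_{H_R}(A_R)$ is maximum among all independent subsets of $R$, together with a minimum $H_R$-dominator $D_R$ of $A_R$.

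The point of deleting \emph{two} consecutive layers at each cut is that then, for distinct runs $R\ne R'$, the closed neighbourhoods $N_G[A_R]$ and $N_G[A_{R'}]$ are disjoint: $A_R$ and its $G$-neighbours lie within one layer of $R$, and consecutive runs are separated by a two-layer gap. Consequently $A^{(r)}:=\bigcup_R A_R$ is independent in $G$; no vertex of $G$ dominates $A$-vertices of two different runs; $\gamma_G(A_R)=\gamma_{H_R}(A_R)$ for each $R$ (every $G$-neighbour of $A_R$ lies in $H_R$); and therefore $\gamma_G(A^{(r)})=\sum_R\gamma_{H_R}(A_R)$, which the dynamic program has computed, with $D^{(r)}:=\bigcup_R D_R$ a minimum $G$-dominating set of $A^{(r)}$. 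The algorithm returns, over the $k$ shifts, the set $D^{(r)}$ of largest cardinality together with its witnessing independent set $A^{(r)}$; this takes $k=O(1/\epsilon)$ invocations of the Section~\ref{section tw} procedure on graphs of treewidth $O(1/\epsilon)$.

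For the guarantee, let $A^\ast$ be an independent set with $\gamma_G(A^\ast)=\gamma^i(G)$ and $D^\ast$ a minimum $G$-dominator of $A^\ast$, so $|D^\ast|=\gamma^i(G)$. Fix $r$ and let $Z_r=A^\ast\cap(\text{layers }\equiv r\text{ or }r+1)$ be the part of $A^\ast$ deleted at shift $r$. Since $A^\ast\setminus Z_r$ meets each run $R$ in an independent subset of $R$, the decoupling above gives $\gamma_G(A^\ast\setminus Z_r)=\sum_R\gamma_G((A^\ast\setminus Z_r)\cap R)\le\sum_R\gamma_{H_R}(A_R)=\gamma_G(A^{(r)})$, while subadditivity of $\gamma_G$ gives $\gamma_G(A^\ast\setminus Z_r)\ge\gamma^i(G)-\gamma_G(Z_r)$; hence $\gamma_G(A^{(r)})\ge\gamma^i(G)-\gamma_G(Z_r)$. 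To control $\sum_r\gamma_G(Z_r)$, note that for every $w\in D^\ast$ the set $N_G[w]$ lies in three consecutive layers, so it meets the deleted layers of at most four of the $k$ shifts; restricting $D^\ast$ to those of its vertices that dominate a vertex of $Z_r$ yields $\gamma_G(Z_r)\le|\{w\in D^\ast:N_G[w]\text{ meets a deleted layer of shift }r\}|$, and summing over $r$ gives $\sum_r\gamma_G(Z_r)\le 4|D^\ast|=4\gamma^i(G)$. Thus some shift $r^\ast$ has $\gamma_G(Z_{r^\ast})\le 4\gamma^i(G)/k\le\epsilon\,\gamma^i(G)$, so $\gamma_G(A^{(r^\ast)})=|D^{(r^\ast)}|\ge(1-\epsilon)\gamma^i(G)$, and the returned solution is at least this large.

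The main obstacle is precisely this max--min behaviour: plain vertex deletion is unsafe for $\gamma^i$, which is why the layering must restrict only the independent set (never its dominators) and must keep a two-layer buffer at every cut so that the surviving independent set stays independent \emph{and} retains its exact domination cost piece by piece. The other step that needs care is the bound $\sum_r\gamma_G(Z_r)=O(\gamma^i(G))$ --- the naive bound is $O(n)$ --- which is obtained by charging to the optimal dominator $D^\ast$ through the three-layer footprint of a closed neighbourhood; pinning down the constant there is what fixes $k=\Theta(1/\epsilon)$.
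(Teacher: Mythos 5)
Your argument is correct, and it is genuinely different from---and more careful than---the one the paper gives. The paper applies Baker's technique in the most direct way: for each shift $\ell$ it deletes every $k$-th layer, runs the treewidth algorithm to get $\gamma^i(G(\ell,k))$ on the leftover graph, and claims $\sum_\ell \gamma^i(G(\ell,k)) \ge (k-1)\gamma^i(G)$ because ``the vertices of $M$ are counted $k-1$ times.'' Both of your objections to that route are well taken. First, $\gamma^i$ is not additive over the vertices of $M$, so counting surviving vertices does not bound the sum from below; one really does need your charging argument against the optimal dominator $D^\ast$ via the three-layer footprint of $N[w]$, which changes the constant from $k-1$ to $k-O(1)$ (harmless, it only recalibrates $k$). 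Second, and more seriously, $\gamma^i$ of an induced subgraph can exceed $\gamma^i(G)$ (a wheel: deleting the hub's layer leaves a rim whose alternating independent set is expensive to dominate in the subgraph but costs $1$ in $G$), so the witness returned on $G(\ell,k)$ need not certify a large value of $\gamma_G$, and the set the paper outputs is only guaranteed to be large, not to be a minimum dominator of an independent set of $G$. Your remedy---confine the independent set to the interior layers of each run while letting its dominators range over the run plus a one-layer buffer, and delete \emph{two} consecutive layers per cut so that the closed neighbourhoods of distinct runs are disjoint---makes $\gamma_G(A^{(r)})$ exactly equal to the sum of the per-run DP values and keeps the output a genuine witness with $\gamma_G(A^{(r)}) \le \gamma^i(G)$. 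The price is a mild modification of the DP of Section~\ref{section tw} and the constant $4$ in $k=\lceil 4/\epsilon\rceil$; what it buys is an argument that actually closes. One small detail to add: to get $|D^{(r)}|=\sum_R|D_R|$ you should first discard from each $D_R$ any vertex dominating nothing of $A_R$, so that $D_R\subseteq N_G[A_R]$ and the disjointness of these closed neighbourhoods makes the union disjoint.
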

\begin{proof}
Let $k \in \mathbb{N}$. Let $\ell \in \{1,\dots,k\}$ and consider 
removing layers 
\[L_{\ell}, L_{\ell+k}, L_{\ell+2k}, \dots.\] 
Let $G(\ell,k)$ be the remaining graph. Then every component 
of $G$ has at most $k$ layers, and so $G(\ell,k)$ 
has treewidth at most $3k-1$. Using the algorithm of 
Section~\ref{section tw} we can compute the independence domination 
numbers of $G(\ell,k)$, for $\ell \in \{1,\dots,k\}$. 

\medskip 

\noindent
Let $M$ be an independent set in $G$ with $\gamma(M)=\gamma^i(G)$. 
If we sum over $\ell \in \{1,\dots,k\}$, the vertices of $M$ 
are counted $k-1$ times. Each $\gamma^i(G(\ell,k))$ is 
at least as big as the dominating set that is needed to 
dominate the remaining vertices of $M$. Therefore, the sum 
over $\gamma^i(G(\ell,k))$ is at least $(k-1)\cdot \gamma^i(G)$. 
Therefore, if we take the maximum of $\gamma^iG(\ell,k))$ 
over $\ell \in \{1,\dots,k\}$ we find an approximation 
of size at least $(1-\frac{1}{k})\cdot\gamma^i(G)$. 
\qed\end{proof}
   
\section{Concluding remarks}

One of our motivations to look into the 
independence domination number for classes of perfect graphs 
is the domination problem for edge-clique graphs of 
cographs. The main reason to look into this 
are the recent complexity results on 
edge-clique covers~\cite{kn:cygan,kn:impagliazzo}.

Let $G=(V,E)$ be a graph. The edge-clique graph 
$K_e(G)$ is the graph which has $E$ as its vertices and 
in which two elements of $E$ are adjacent when they 
are contained in a clique of 
$G$~\cite{kn:albertson,kn:cerioli,kn:gregory,%
kn:raychaudhuri,kn:raychaudhuri2}. 

\medskip 

Let $G$ and $H$ be two 
graphs. 
The strong product $G \boxtimes H$ is the subgraph 
of $K_e(G \otimes H)$ induced by the edges that have one 
endpoint in $G$ and the other in $H$. In other words, 
the vertices of $G \boxtimes H$ are pairs $(g,h)$ with 
$g \in V(G)$ and $h \in V(H)$. Two vertices $(g_1,h_1)$ and 
$(g_2,h_2)$ are adjacent when $g_1 \in N[g_2]$ and 
$h_1 \in N[h_2]$. It is 
well-known~\cite{kn:korner,kn:lovasz,kn:ramirez,kn:shannon} 
that, when $G$ and $H$ are perfect,  
\[\alpha(G \boxtimes H) = \alpha(G) \cdot \alpha(H).\] 
Notice however that $G \boxtimes H$ itself is not 
necessarily perfect. 
For example $C_4 \boxtimes C_4$ contains an induced $C_5$. 
The determination of 
$\alpha(G \boxtimes G)$ is very hard when $G$ is 
not perfect. Lov\'asz proved that 
$\alpha(C_5 \boxtimes C_5) = \sqrt{5}$ but, as far as we know, 
$\alpha(C_7 \boxtimes C_7)$ is open~\cite{kn:lovasz}. 

\medskip 

The independence number of the strong product has been investigated 
a lot due to its applications in data compression and coding theory. 
Very little is known about the (independent) domination number 
of strong products, although some investigations 
were made 
in~\cite{kn:bisbee,kn:bresar,kn:domke,%
kn:farber3,kn:farber2,kn:farber,kn:fisher,%
kn:hare,kn:hell,kn:kloks4,kn:ma,kn:mceliece,kn:park,%
kn:rubalcaba,kn:scheinerman}.         

\medskip 

As far as we know, the domination number 
for the edge-clique graphs 
of complete multipartite graphs is open. For simplicity, we call this 
the edge-domination number.\footnote{One should be cautious because 
this terminology is also used for 
a different concept.} A minimum edge-domination set is 
not necessarily realized by 
the complete bipartite subgraph induced by the two 
smallest color classes. For example, $K(2,2,2)$ has edge-domination 
number three while the complete bipartite 
$K(2,2)$ has four edges. The edge-clique cover for complete multipartite 
graphs seems to be a very hard problem~\cite{kn:kloks,kn:ma,kn:park}.

\end{document}